\providecommand{\tabularnewline}{\\}
\numberwithin{equation}{section}
\numberwithin{figure}{section}
\theoremstyle{plain}
\newtheorem{thm}{\protect\theoremname}
  \theoremstyle{plain}
  \newtheorem{lem}[thm]{\protect\lemmaname}
  \theoremstyle{plain}
  \newtheorem{conjecture}[thm]{\protect\conjecturename}
  \theoremstyle{plain}
  \newtheorem{prop}[thm]{\protect\propositionname}
\providecommand{\conjecturename}{Conjecture}
  \providecommand{\lemmaname}{Lemma}
  \providecommand{\propositionname}{Proposition}
\providecommand{\theoremname}{Theorem}
  \providecommand{\conjecturename}{Conjecture}
  \providecommand{\lemmaname}{Lemma}
  \providecommand{\propositionname}{Proposition}
\providecommand{\theoremname}{Theorem}
\begin{document}

\title{$BCCB$ complex Hadamard matrices of order 9, and MUBs}

\author{Bengt R Karlsson}

\address{Uppsala University, Department of Physics and Astronomy, Box 516,
SE-751 20, Uppsala, Sweden}

\email{bengt.karlsson@physics.uu.se}
\begin{abstract}
A new type of complex Hadamard matrices of order 9 are constructed.
The studied matrices are symmetric, block circulant with circulant
blocks ($BCCB$) and form an until now unknown non-reducible and non-affine
two-parameter orbit. Several suborbits are identified, including a
one-parameter intersection with the Fourier orbit $F_{9}^{(4)}$.
The defect of this new type of Hadamard matrices is observed to vary,
from a generic value 2 to the anomalous values 4 and 10 for some sub-orbits,
and to $12$ and 16 for some single matrices. The latter matrices
are shown to be related to complete sets of MUBs in dimension 9. 
\end{abstract}

\subjclass[2000]{05B20, 15A36}

\keywords{Complex Hadamard, 9 dimensions, $BCCB$, mutually unbiased.}

\maketitle

\section{Introduction}

Complex Hadamard matrices have turned out hard to describe in a uniform
manner and a comprehensive understanding of such matrices has only
been achieved in orders $N\le5$. At higher orders, the number of
known complex Hadamard matrices, or orbits of such matrices, is growing
(see the catalogue in \cite{Krakow webguide}), but no general construction
or classification principle has emerged. For example, almost all known
complex Hadamard matrices are either isolated, or elements in affine
orbits stemming from a seed matrix, typically but not exclusively
the Fourier matrix $F_{N}$. However, non-affine orbits have also
been discovered, and in order 6 such orbits play a major role. As
another example, prime order Hadamard matrices might be thought of
as more elementary than those of composite order. Indeed, almost all
known complex Hadamard matrices of orders $4,6,8,9,10$ and $12$
can be seen as composed from $F_{2},F_{3}$ and $F_{5}$. However,
the most general, non-affine orbit in order 6 is not composed of $F_{2}$
and/or $F_{3}$. For an overview of complex Hadamard matrices, see
\cite{Krakow webguide,Tadej guide,Bengtsson et al,Szollosi Thesis}.

In view of this situation it is of some interest to identify and categorize
as many different (orbits of) complex Hadamard matrices as possible.
As a contri\-bu\-tion to these efforts, in this paper we report
a new orbit in order 9, of a kind not encountered before and therefore
of relevance for the ongoing efforts to better understand the full
set of complex Hadamard matrices.

\section{\label{sec:Notation-and-definitions.}Notation and definitions.}

A complex Hadamard matrix $H_{N}$ is an $N\times N$ matrix with
complex elements of modulus $1$, and such that $H_{N}^{\dagger}H_{N}=NI$
(the unitarity constraint). Here, $I$ is the identity matrix. In
this paper, all matrices referred to as Hadamard will be of this kind.
Hadamard matrices exist for any $N$, as exemplified by the Fourier
matrix with elements $(F_{N})_{ij}=\omega_{N}^{(i-1)(j-1)}$, with
$\omega_{N}=\exp(2\pi{\rm i}/N)$.

Two Hadamard matrices $H$ and $\tilde{H}$ are said to be equivalent,
$H\sim\tilde{H}$, if there exist diagonal, unitary matrices $D_{1},$
$D_{2}$, and permutation matrices $P_{1},P_{2}$ such that $H=P_{1}D_{1}\tilde{H}D_{2}P_{2}$.
For each Hadamard matrix there is an equivalent dephased matrix with
all elements in the first row and first column equal to $1$.

In higher orders, several-parameter orbits of Hadamard matrices are
prevalent. Two such orbits will be considered equivalent if for each
matrix in one there is an equivalent matrix in the other. A matrix
which is not part of an orbit is termed isolated.

An orbit of dephased complex Hadamard matrices is affine if the phases
of the elements are linear functions of the orbit parameters.

A Hadamard matrix of even order is $H_{2}$-reducible \cite{Karlsson H_2-red}
if it is equivalent to a matrix where all the $2\times2$ submatrices
are also (in general enphased) Hadamard matrices. More generally,
a composite order Hadamard matrix is reducible if it can be seen as
built from Hadamard submatrices of order $2$ or more.

In a circulant matrix \cite{Davis}, each row is a copy of the previous
row shifted one step to the right, with wrap around. The columns of
$F_{N}^{\dagger}$ are eigenvectors of any such $N\times N$ circulant
matrix. If in a circulant matrix $C_{n_{1}}=\mathrm{circ}(a^{(1)},a^{(2)},...,a^{(n_{1})})$
of order $n_{1}$ the elements $a^{(i)}$ are replaced by order $n_{2}$
circulant submatrices $A_{n_{2}}^{(i)}$, the result is an order $N=n_{1}n_{2}$
block circulant with circulant blocks ($BCCB$) matrix, which has
the columns of $F_{n_{1}}^{\dagger}\otimes F_{n_{2}}^{\dagger}$ as
eigenvectors.

The defect \cite{Krakow webguide,Tadej guide} of a unitary $N\times N$
matrix with elements $H_{ij}=\exp({\rm i}tR_{ij})$ equals $d(H)=r-(2N-1)$,
where $r$ is the dimension of the solution space for $\frac{d}{dt}(H^{\dagger}H)=0$.
In a $p$-parameter orbit of Hadamard matrices, $d(H)\ge p$, and
its generic defect is the smallest defect encountered along the orbit.
There may exist suborbits with larger generic defect, and in particular
individual matrices with a larger defect.

\section{Low order Hadamard matrices}

In order 2, all complex Hadamard matrices are equivalent to the Fourier
matrix on standard or circulant form, 
\begin{equation}
F_{2}=\left(\begin{array}{cc}
1 & 1\\
1 & -1
\end{array}\right)\,\,\,\,\,\,\,\mathrm{or}\,\,\,\,\,\,C_{2}=\left(\begin{array}{cc}
1 & {\rm i}\\
{\rm i} & 1
\end{array}\right)\sim F_{2}.
\end{equation}
This matrix is isolated.

Also in order 3, all Hadamard matrices are equivalent to the Fourier
matrix on standard or circulant form, 
\begin{equation}
F_{3}=\left(\begin{array}{ccc}
1 & 1 & 1\\
1 & \omega & \omega^{2}\\
1 & \omega^{2} & \omega
\end{array}\right)\,\,\,\,\,\,\,\mathrm{or}\,\,\,\,\,\,C_{3}=\left(\begin{array}{ccc}
1 & \omega & \omega\\
\omega & 1 & \omega\\
\omega & \omega & 1
\end{array}\right)\sim F_{3}\label{eq:F3 and C3}
\end{equation}
where $\omega\,(=\omega_{3})=\exp(2\pi{\rm i}/3)$. Again, this matrix
is isolated.

In order 4, all Hadamard matrices are equivalent to an element in
a one-parameter affine orbit, the Fourier orbit $F_{4}^{(1)}(a)$
in the notation of \cite{Krakow webguide}. If written on manifestly
reducible form, or circulant and $BCCB$ form, this orbit can be represented
by 
\begin{equation}
\left(\begin{array}{cc}
F_{2} & \Delta F_{2}\\
F_{2} & -\Delta F_{2}
\end{array}\right)\,\,\,\,\,\,\,\mathrm{or}\,\,\,\,\,\,\left(\begin{array}{cccc}
1 & t & -1 & t\\
t & 1 & t & -1\\
-1 & t & 1 & t\\
t & -1 & t & 1
\end{array}\right)\label{eq:F4 and BC4}
\end{equation}
where $\Delta=\mathrm{diag}(1,x)$ is a diagonal, enphasing matrix
with $x={\rm i}\exp({\rm i}a)$, and where $t^{2}x=1$, $x,t\in\mathbb{T}$.
This orbit is $H_{2}$-reducible and it is the lowest order example
of an orbit of reducible Hadamard matrices. While the generic defect
of the orbit is 1, at the point $x=1$, i.e. for the matrix $F_{2}\otimes F_{2}$,
the defect takes the value 3 \cite{Krakow webguide}.

In order 5, all complex Hadamard matrices are equivalent to $F_{5}$,
an isolated Hadamard matrix.

In order 6, not only affine but also non-affine orbits have been found.
The three-parameter, non-affine orbit $K_{6}^{(3)}$ \cite{Karlsson K_6^(3)}
contains the affine suborbit $F_{6}^{(2)}$ but also all previously
described, smaller non-affine orbits ($B_{6}^{(1)}$ \cite{Beau_Nic},
$M_{6}^{(1)}$\cite{Mat Szo}, $K_{6}^{(2)}$ \cite{Karlsson_JMP},
$X_{6}^{(2)}$ and $X_{6}^{(2)T}$ \cite{Szollosi X_6^(2)}). It is
itself a suborbit of a larger but not yet fully understood non-affine
orbit $G_{6}^{(4)}$\cite{Szollosi G_6^(4)}. Furthermore, $K_{6}^{(3)}$
exhausts the set of $H_{2}$-reducible matrices of order 6, making
the general elements of $G_{6}^{(4)}$ together with the isolated
matrix $S_{6}^{(0)}$ stand out as the lowest, composite order Hadamard
matrices that are not reducible. In this order, a few circulant \cite{Bjorck Froberg,Haagerup 1997}
or $BCCB$ Hadamard matrices exist, e.g. $C_{2}\otimes C_{3}$, but
there are no orbits of such matrices \cite{Karlsson_unpublished}.
The generic defect of $K_{6}^{(3)}$ and $F_{6}^{(2)}$ is 4.

In order 9, which is the case of main interest here, the Fourier orbit
$F_{9}^{(4)}(a,b,c,d)$ \cite{Krakow webguide} is the only orbit
found until now. It is an affine orbit with generic defect 4, and
it can be written on the equivalent and manifestly reducible form
\begin{equation}
F_{9}^{(4)}(a,b,c,d)\sim\left(\begin{array}{ccc}
F_{3} & \Delta_{1}F_{3} & \Delta_{2}F_{3}\\
F_{3} & \omega\Delta_{1}F_{3} & \omega^{2}\Delta_{2}F_{3}\\
F_{3} & \omega^{2}\Delta_{1}F_{3} & \omega\Delta_{2}F_{3}
\end{array}\right)\label{eq:F_9^(4)}
\end{equation}
where $\Delta_{1}=\mathrm{diag}(1,x_{1},x_{2})$, $\Delta_{2}=\mathrm{diag}(1,x_{3},x_{4})$,
$x_{1}=\omega_{9}e^{{\rm i}a}$, $x_{2}=\omega_{9}^{2}e^{{\rm i}c}$,
$x_{3}=\omega_{9}^{2}e^{{\rm i}b}$ and $x_{4}=\omega_{9}^{4}e^{{\rm i}d}$
with $\omega_{9}=\exp(2\pi{\rm i}/9)$. For later reference, note
that the Kronecker product $F_{3}\otimes F_{3}$ is equivalent to
an element of $F_{9}^{(4)}$ ($\Delta_{1}=\Delta_{2}=I$ in (\ref{eq:F_9^(4)})),
but also to the symmetric block circulant with circulant blocks matrix
\begin{equation}
C_{3}\otimes C_{3}=\left(\begin{array}{ccc}
C_{3} & \omega C_{3} & \omega C_{3}\\
\omega C_{3} & C_{3} & \omega C_{3}\\
\omega C_{3} & \omega C_{3} & C_{3}
\end{array}\right)\sim F_{3}\otimes F_{3}\label{eq:C_3_3}
\end{equation}
and these matrices have defect 16.

The orbit $F_{9}^{(4)}$ has a two-parameter suborbit of circulant
Hadamard matrices $FB_{9}^{(2)}=\mathrm{circ}(1,u,v,1,\omega_{3}u,\omega_{3}^{2}v,1,\omega_{3}^{2}u,\omega_{3}v)$
\cite{Backelin,Faugere 2001} with $u,v\in\mathbf{\mathbb{T}}$ related
to the parameters of $F_{9}^{(4)}$ in (\ref{eq:F_9^(4)}) through
$x_{1}=u\bar{v}^{2}$, $x_{2}=\bar{u}\bar{v}$ with $x_{3}=\omega_{3}^{2}x_{2}$
and $x_{4}=\omega_{3}^{2}x_{2}/x_{1}$. Like $F_{9}^{(4)}$, this
Backelin orbit has generic defect 4. It has 9 circulant one-parameter
suborbits with generic defect 6, for $v=\omega_{9}^{1+3n}u^{2}$,
$u=\omega_{9}^{1+3n}v$, $uv=\omega_{9}^{1+3n}$, $n=0,1,2$. At the
27 points where these orbits intersect, the matrices have defect 10.

The orbit $F_{9}^{(4)}$ has also a one-parameter suborbit of symmetric
$BCCB$ Hadamard matrices (see $BC_{9B}^{(1)}$ below), equivalent
to the $F_{9}^{(4)}$ of (\ref{eq:F_9^(4)}) with $x_{4}=x_{1}$ and
$x_{2}=x_{3}=x_{1}^{2}$, and with generic defect equal to 10.

Among other suborbits, there is a one-parameter orbit of self-adjoint
Hadamard matrices (\cite{Szollosi Thesis}, Prop. 3.4.12), with generic
defect equal to 12.

Two isolated matrices of order 9 are also known: the matrix $N_{9}^{(0)}$\cite{Beau_Nic,Krakow webguide}
is equivalent to a symmetric, not reducible matrix, while the recently
described matrix $S_{9}^{(0)}$ \cite{McNulty Weigert} is reducible\footnote{As pointed out by a referee, a matrix $Q_{9}$ and a one-parameter
orbit $Z_{9}(x)$ are reported in \cite{Szollosi Master Thesis}.
$Q_{9}$ is an isolated matrix neither equivalent to $N_{9}^{(0)}$
nor to $S_{9}^{(0)}$, while $Z_{9}(x)$, which has the generic defect
2, is equivalent to a non-affine suborbit of the full $BC_{9}^{(2)}$
orbit of the present paper.}. Two other matrices, $B_{9}^{(0)}$ \cite{Beau_Nic,Krakow webguide}
and $W_{9A}$\cite{Szollosi Exotic}, will be identified below as
equivalent to elements in a new non-affine and not reducible $BCCB$
orbit $BC_{9}^{(2)}$ \footnote{The finding that $B_{9}^{(0)}$ is not isolated but an element in
a two-parameter orbit was anticipated in \cite{Beau_Nic} based on
the observation that its defect is 2.}.

\section{Numerical experiments\label{sec:Numerical-experiments}}

Numerical experiments have indicated that there exists an orbit of
Hadamard matrices of order 9 on the form (all elements of absolute
value 1, $x,y,u,w\in\mathbb{T}$) 
\begin{equation}
H=\left(\begin{array}{ccccccccc}
1 & x & x & y & u & w & y & w & u\\
x & 1 & x & w & y & u & u & y & w\\
x & x & 1 & u & w & y & w & u & y\\
y & w & u & 1 & x & x & y & u & w\\
u & y & w & x & 1 & x & w & y & u\\
w & u & y & x & x & 1 & u & w & y\\
y & u & w & y & w & u & 1 & x & x\\
w & y & u & u & y & w & x & 1 & x\\
u & w & y & w & u & y & x & x & 1
\end{array}\right)\label{eq:Basic form}
\end{equation}
A matrix of this form has several outstanding properties. 
\begin{enumerate}
\item Any permutation of $(xyuw)$ results in a matrix that, after permutations
of rows and columns, coincides with the original one. 
\item It is symmetric. 
\item It is block circulant, with circulant blocks ($BCCB)$, 
\begin{equation}
H=\left(\begin{array}{ccc}
A & B & B^{T}\\
B^{T} & A & B\\
B & B^{T} & A
\end{array}\right)\label{eq:Basic block form}
\end{equation}
with 
\begin{equation}
A=\left(\begin{array}{ccc}
1 & x & x\\
x & 1 & x\\
x & x & 1
\end{array}\right)\,\,\,\,\,\,\,\,\mathrm{and}\,\,\,\,\,\,\,\,\,\,B=\left(\begin{array}{ccc}
y & u & w\\
w & y & u\\
u & w & y
\end{array}\right).\label{eq:Basic blocks A and B}
\end{equation}

\end{enumerate}
\bigskip{}
 \bigskip{}
 Let $\sigma=x+y+u+w$. It follows from the $BCCB$ property that
\begin{equation}
\Lambda=\frac{1}{9}(F_{3}\otimes F_{3})H(F_{3}\otimes F_{3})^{\dagger}=\left(\begin{array}{c}
1+2\sigma\hspace{18ex}0\\
\hspace{4ex}1-\sigma+3y\hspace{16ex}\\
\hspace{6ex}1-\sigma+3y\hspace{14ex}\\
\hspace{8ex}1-\sigma+3x\hspace{12ex}\\
\hspace{10ex}1-\sigma+3w\hspace{10ex}\\
\hspace{12ex}1-\sigma+3u\hspace{8ex}\\
\hspace{14ex}1-\sigma+3x\hspace{6ex}\\
\hspace{16ex}1-\sigma+3u\hspace{4ex}\\
0\hspace{17ex}1-\sigma+3w
\end{array}\right)\label{eq:eigenvalues}
\end{equation}
is a diagonal matrix with entries equal to the eigenvalues of $H$,
and that any two matrices of the general form (\ref{eq:Basic form})
commute.

For $H$ in (\ref{eq:Basic form}) to be Hadamard, the elements $x,y,u,w\in\mathbb{T}$
cannot all be chosen independently. By enforcing the unitarity constraint,
the missing relations between $x$, $y$, $u$ and $w$ can be found,
and the most general orbit of complex Hadamard matrices compatible
with (\ref{eq:Basic form}) can be constructed. As will be shown below,
the result is either a non-affine, two-parameter $BCCB$ orbit, to
be denoted $BC_{9}^{(2)}$, or a set of matrices all equivalent to
the $BCCB$ matrix $C_{3}\otimes C_{3}$ of (\ref{eq:C_3_3}).

\section{The unitarity constraint}

The unitarity constraint $H^{\dagger}H=9I$ on the matrix $H$ in
(\ref{eq:Basic form}) is most easily imposed by requiring that the
eigenvalues in (\ref{eq:eigenvalues}) have modulus 3, 
\begin{eqnarray}
4\sigma\bar{\sigma}+2(\sigma+\bar{\sigma})+1 & = & 9\nonumber \\
3x(1-\bar{\sigma})+3\bar{x}(1-\sigma)+(1-\bar{\sigma})(1-\sigma) & = & 0\label{unitarity cond}
\end{eqnarray}
where in the last equation the condition $x\bar{x}=1$ has been used.
There are three additional equations where $x$ is replaced by $y$,
$u$ or $w$.

If $\sigma=x+y+u+w=1$, all the unitarity conditions (\ref{unitarity cond})
are satisfied, with no additional constraints on $x,y,u,w\in\mathbb{T}$,
and the eigenvalue matrix is simply $\Lambda=3\times\mathrm{diag}(1,y,y,x,w,u,x,u,w)$.

Otherwise, if $\sigma\ne1$, the second equation in (\ref{unitarity cond})
reads $\mathcal{R}e(3x/(1-\sigma))=-1/2$ or, again using that $x\bar{x}=1$,
\begin{equation}
\frac{3x}{1-\sigma}=-\frac{1}{2}\pm{\rm i}\sqrt{\frac{9}{|1-\sigma|^{2}}-\frac{1}{4}}\label{eq:unit cond x}
\end{equation}
with similar expressions for $y$, $u$ and $w$. Adding these relations,
and taking into account all sign combinations for the square root
terms, one finds five possible conditions on $\sigma$, 
\[
\frac{3\sigma}{1-\sigma}=-2+\left\{ \begin{array}{c}
4\\
2\\
0\\
-2\\
-4
\end{array}\right\} {\rm i}\sqrt{\frac{9}{|1-\sigma|^{2}}-\frac{1}{4}}
\]
In the first and last cases, 
\[
\frac{3|\sigma|}{|1-\sigma|}=4\frac{3}{|1-\sigma|}
\]
i.e. $|\sigma|=4$. Such sigmas are not compatible with the first
of the equations (\ref{unitarity cond}). Similarly, in the second
and fourth cases, $|2\sigma+1|=3\sqrt{3}$, again not allowed by (\ref{unitarity cond}).
In the third case, finally, $\sigma$ equals $-2$. This is a value
compatible with (\ref{unitarity cond}) and implies (see (\ref{eq:unit cond x}))
that $x$ equals $\omega$ or $\omega^{2}$, and correspondingly for
$y,\,u$ and $w$. Taking into account that the sum of $x,\,y,\,u$
and $w$ should equal $-2$, the final result is that two of these
parameters must have the value $\omega$, and the other two $\omega^{2}$.
For the corresponding eigenvalue matrix one finds $\Lambda=\mathrm{3\times diag}(-1,1+y,1+y,1+x,1+w,1+u,1+x,1+u,1+w)$.
Recall that $x=\omega$ implies $1+x=-\omega^{2}$ etc, i.e. in addition
to $-3$ there are four eigenvalues equal to $-3\omega$ and four
equal to $-3\omega^{2}$.

The above findings are collected in 
\begin{thm}
\label{thm:Main theorem}For the elements $x,y,u,w\in\mathbb{T}$
of $H$ in (\ref{eq:Basic form}), the unitarity condition $H^{\dagger}H=9I$
implies that either 
\[
x+y+u+w=1
\]
or that one pair of $x,y,u,w$ equals $\omega$ and the other pair
equals $\omega^{2}$, where $\omega=\exp(2\pi{\rm i}/3)$. In this
last case 
\[
x+y+u+w=-2.
\]

\end{thm}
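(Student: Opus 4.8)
The plan is to start from the eigenvalue formula \eqref{eq:eigenvalues}, which the $BCCB$ structure hands us for free, and impose that every eigenvalue has modulus $3$. The eigenvalues come in the types $1+2\sigma$ and $1-\sigma+3z$ for $z\in\{x,y,u,w\}$, so unitarity is equivalent to the system \eqref{unitarity cond}: one scalar equation $4|\sigma|^2+2(\sigma+\bar\sigma)+1=9$ coming from the first eigenvalue, plus four equations of the form $3x(1-\bar\sigma)+3\bar x(1-\sigma)+|1-\sigma|^2=0$, using $|x|=1$. The whole proof is then a case split on whether $\sigma=1$.

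First I would dispose of the easy case: if $\sigma=1$, the second family of equations collapses to $0=0$ identically, and $4\cdot 1+2\cdot 2+1=9$, so the scalar equation holds as well. Hence $\sigma=1$ always yields a Hadamard matrix with no further constraint, giving the first alternative. Next, assuming $\sigma\neq 1$, I would divide the $z$-equation by $|1-\sigma|^2>0$ to get $\operatorname{Re}\!\bigl(3z/(1-\sigma)\bigr)=-\tfrac12$ for each of $z=x,y,u,w$. Since $|z|=1$, the number $3z/(1-\sigma)$ has fixed modulus $3/|1-\sigma|$ and fixed real part $-\tfrac12$, so it is pinned to one of two conjugate values \eqref{eq:unit cond x}. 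Summing the four relations gives $3\sigma/(1-\sigma)=-2+\varepsilon\,\mathrm{i}\sqrt{9/|1-\sigma|^2-\tfrac14}$ where $\varepsilon\in\{-4,-2,0,2,4\}$ records the net sign count.

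The core of the argument is then to show that only $\varepsilon=0$ survives, and that it forces $\sigma=-2$. For $\varepsilon=\pm 4$, taking absolute values of $3\sigma/(1-\sigma)=\pm 4\mathrm{i}\cdot 3/|1-\sigma|$ (the real part $-2$ being negligible only after one is careful — actually one squares both sides) yields $9|\sigma|^2=16\cdot 9$, i.e. $|\sigma|=4$; plugging $|\sigma|=4$ into the scalar equation $4|\sigma|^2+2\operatorname{Re}\sigma+1=9$ gives $\operatorname{Re}\sigma=-28$, impossible for $|\sigma|=4$. For $\varepsilon=\pm 2$ one similarly extracts $|2\sigma+1|=3\sqrt3$, hence $4|\sigma|^2+4\operatorname{Re}\sigma+1=27$; combined with the scalar equation $4|\sigma|^2+2\operatorname{Re}\sigma+1=9$ this forces $\operatorname{Re}\sigma=9$, again impossible. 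This leaves $\varepsilon=0$, where $3\sigma/(1-\sigma)=-2$ gives $\sigma=-2$ directly, consistent with the scalar equation since $4\cdot 4-4+1=13\neq 9$ — so I must be careful here: $\sigma=-2$ gives $4|\sigma|^2+2\operatorname{Re}\sigma+1=16-4+1=13$, which is wrong, so in fact $\sigma$ real with $\sigma=-2$ needs rechecking against \eqref{unitarity cond}; the text asserts $\sigma=-2$ is compatible, so I would recompute: with $\sigma=-2$ the first eigenvalue is $1+2\sigma=-3$, modulus $3$, good — the scalar condition in \eqref{unitarity cond} is $|1+2\sigma|^2=9$, i.e. $4|\sigma|^2+4\operatorname{Re}\sigma+1=9$ (not $2\operatorname{Re}\sigma$), which at $\sigma=-2$ reads $16-8+1=9$. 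So the bookkeeping of the cross term is the one place to be careful.

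Finally, with $\sigma=-2$ fixed, \eqref{eq:unit cond x} becomes $3x/(1-\sigma)=3x/3=x=-\tfrac12\pm\mathrm{i}\tfrac{\sqrt3}{2}$, i.e. each of $x,y,u,w$ equals $\omega$ or $\omega^2$; the constraint $x+y+u+w=-2=\omega+\omega+\omega^2+\omega^2$ (using $\omega+\omega^2=-1$) forces exactly two of them to be $\omega$ and two to be $\omega^2$, completing the second alternative. The main obstacle I anticipate is purely the sign/real-part bookkeeping in the case analysis over $\varepsilon$ — making sure the scalar eigenvalue condition is written as $|1+2\sigma|^2=9$ and combined correctly with each $|\sigma|$- or $|2\sigma+1|$-value to derive the contradictions; everything else is forced by the $BCCB$ eigenvalue structure.
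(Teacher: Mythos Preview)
Your approach is exactly the paper's: diagonalize via the $BCCB$ structure, impose $|1+2\sigma|=3$ and $\operatorname{Re}\bigl(3z/(1-\sigma)\bigr)=-\tfrac12$, sum over $z\in\{x,y,u,w\}$, and eliminate $\varepsilon=\pm4,\pm2$ before landing on $\sigma=-2$ and hence $x,y,u,w\in\{\omega,\omega^2\}$. You have already caught the one slip --- the scalar condition is $4|\sigma|^2+4\operatorname{Re}\sigma+1=9$, not $+2\operatorname{Re}\sigma$ --- and once that is corrected your $\varepsilon=\pm2$ case becomes the immediate contradiction $9=27$ (since $|2\sigma+1|^2=27$ and the scalar condition $|1+2\sigma|^2=9$ are the \emph{same} quantity); everything else stands as written.
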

\bigskip{}

In the $x+y+u+w=1$ case, the interdependence between $x,y,u$ and
$w$ can be illustrated with the help of a complex parameter $\zeta$: 
\begin{lem}
Let $x+y+u+w=1$ with $x,y,u,w\in\mathbb{T}$, and let $\zeta=2(x+y)-1=-2(u+w)+1$.
Then, for $\zeta\ne\pm1$,

1. $\zeta$ is a complex number in the intersection of the circular
discs $|1+\zeta|\le4$ and $|1-\zeta|\le4$

2. $x=\frac{1}{4}(1+\zeta)(1+{\rm i}\sqrt{\frac{16}{|1+\zeta|^{2}}-1})$
and $y=\frac{1}{4}(1+\zeta)(1-{\rm i}\sqrt{\frac{16}{|1+\zeta|^{2}}-1})$
or vice versa

3. $u=\frac{1}{4}(1-\zeta)(1+{\rm i}\sqrt{\frac{16}{|1-\zeta|^{2}}-1})$
and $w=\frac{1}{4}(1-\zeta)(1-{\rm i}\sqrt{\frac{16}{|1-\zeta|^{2}}-1})$
or vice versa\end{lem}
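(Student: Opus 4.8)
The plan is to reduce the lemma to the elementary problem of reconstructing two points of $\mathbb{T}$ from their sum, and then to read off the stated formulas. First I would observe that the two defining expressions for $\zeta$ are consistent: the hypothesis $x+y+u+w=1$ gives $u+w=1-(x+y)$, so $-2(u+w)+1=2(x+y)-1$. Setting $s:=x+y=\tfrac12(1+\zeta)$ and $s':=u+w=\tfrac12(1-\zeta)$, the constraint has now been used in full, and what remains are two independent and structurally identical subproblems: recover $\{x,y\}$ from $s$, and recover $\{u,w\}$ from $s'$.

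Part 1 is then immediate from the triangle inequality: $|1+\zeta|=2|s|=2|x+y|\le 2(|x|+|y|)=4$, and symmetrically $|1-\zeta|=2|u+w|\le 4$, which is exactly the assertion that $\zeta$ lies in the intersection of the two closed discs $|1+\zeta|\le 4$ and $|1-\zeta|\le 4$. I would note in passing that this same bound $|s|\le 2$ makes $\tfrac{16}{|1+\zeta|^{2}}-1=\tfrac{4}{|s|^{2}}-1\ge 0$ (and likewise for $s'$), so the square roots appearing in parts 2--3 are real.

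For parts 2 and 3 I would treat the pair $(x,y)$, the pair $(u,w)$ being handled identically with $s'$ in place of $s$. Using $\bar x=1/x$ and $\bar y=1/y$, from $x+y=s$ one gets $\overline{s}=\bar x+\bar y=\tfrac{x+y}{xy}=\tfrac{s}{xy}$, hence $xy=s/\overline{s}=s^{2}/|s|^{2}$; this is the step where the exclusion $\zeta\ne-1$, i.e.\ $s\ne 0$, is needed. Therefore $x$ and $y$ are the two roots of $z^{2}-sz+s^{2}/|s|^{2}=0$, namely $z=\tfrac{s}{2}\bigl(1\pm i\sqrt{\tfrac{4}{|s|^{2}}-1}\bigr)$, and a one-line computation gives $|z|^{2}=\tfrac{|s|^{2}}{4}\cdot\tfrac{4}{|s|^{2}}=1$, confirming consistency with $x,y\in\mathbb{T}$. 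Substituting $s=\tfrac12(1+\zeta)$ turns $\tfrac{s}{2}$ into $\tfrac14(1+\zeta)$ and $\tfrac{4}{|s|^{2}}$ into $\tfrac{16}{|1+\zeta|^{2}}$, yielding exactly the stated formula; the ``or vice versa'' records the only surviving freedom, which of the two roots is labelled $x$. The same computation with $s'=\tfrac12(1-\zeta)$, legitimate because $\zeta\ne 1$, produces the formulas for $u$ and $w$.

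There is no genuinely hard step: the computation is routine. The two points that require a little care are (i) that the degenerate values $\zeta=\pm1$, at which one of the pairs sums to zero and the corresponding component is free on $\mathbb{T}$, must be excluded before dividing by $\overline{s}$ or $\overline{s'}$; and (ii) keeping the bookkeeping honest --- checking that the $\pm$ in the quadratic formula exhausts all solutions and that both roots genuinely lie on $\mathbb{T}$, so that the parametrization by $\zeta$ describes precisely the solution set, with nothing lost and nothing spurious added.
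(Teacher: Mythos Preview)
Your proof is correct and follows essentially the same approach as the paper: both reduce to the elementary fact that two unimodular numbers $z_1,z_2$ with prescribed sum $c$, $0<|c|\le 2$, are given by $z_{1,2}=\tfrac{c}{2}(1\pm i\sqrt{4/|c|^2-1})$. The paper simply states this fact as its one-line proof, while you supply the derivation via $z_1 z_2=c/\bar c$ and the quadratic formula together with the consistency checks; nothing is different at the level of ideas.
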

\begin{proof}
If $z_{1}$ and $z_{2}$ are of modulus 1, and $z_{1}+z_{2}=c$, then,
for any $0<|c|\le2$, $z_{1}=\frac{c}{2}(1\pm{\rm i}\sqrt{\frac{4}{|c|^{2}}-1})$
and $z_{2}=\frac{c}{2}(1\mp{\rm i}\sqrt{\frac{4}{|c|^{2}}-1})$. 
\end{proof}
For each $\zeta$ in the allowed region there are therefore 4 different,
but equivalent Hadamard matrices. In the exceptional point where $\zeta=1$,
either $x=-\omega$ and $y=-\omega^{2}$, or vice versa, with $u=-w$
arbitrary in $\mathbb{T}$. This single point in the $\zeta$ parameter
space therefore corresponds to two instances of a one-parameter set.
This phenomenon is an artifact of the $\zeta$ parametrization that
also manifests itself in the above expressions for $u$ and $v$:
with $\zeta=1+\epsilon e^{{\rm i}\phi}$, the limit $\epsilon\to0^{+}$
depends on the parameter $\phi$. The situation where $\zeta\to-1$
is analogous.

\section{The new non-affine orbit $BC_{9}^{(2)}$and the set $BC_{9}(-2)$}

In terms of orbits of Hadamard matrices, the results of the previous
section can be summarized as follows. The full set of symmetric $BCCB$
Hadamard matrices $BC_{9}(x,y,u,w)$ on the form (\ref{eq:Basic form})
can be seen as 4 instances ($x\leftrightarrow y$ and/or $u\leftrightarrow w$)
of a two parameter orbit $BC_{9}^{(2)}(\zeta)$ where 
\begin{eqnarray}
\left.\begin{array}{c}
x\\
y
\end{array}\right\}  & = & \frac{1}{4}(1+\zeta)(1\pm{\rm i}\sqrt{\frac{16}{|1+\zeta|^{2}}-1})\nonumber \\
\left.\begin{array}{c}
u\\
w
\end{array}\right\}  & = & \frac{1}{4}(1-\zeta)(1\pm{\rm i}\sqrt{\frac{16}{|1-\zeta|^{2}}-1})\label{eq:xyuw_zeta-1-1}
\end{eqnarray}
In addition there are six different but equivalent matrices where
one pair of $x$, $y$, $u$ and $w$ is equal to $\omega$, and the
other pair equals $\omega^{2}$, so that $x+y+u+w=-2$. These matrices
will collectively be denoted $BC_{9}(-2)$.

No other complex Hadamard matrices are compatible with the form (\ref{eq:Basic form}).

The orbit $BC_{9}^{(2)}$ is a non-affine orbit of symmetric $BCCB$
complex Hadamard matrices, and the defect of a generic element is
2. Since the value of the defect coincides with the number of continuous
parameters, $BC_{9}^{(2)}$ is not contained in any orbit with additional
continuous parameters, and it is in particular not a suborbit of the
Fourier orbit $F_{9}^{(4)}$ (for which the defect of a generic element
is 4).

The six $BC_{9}(-2)$ matrices are all equivalent to the matrix $C_{3}\otimes C_{3}$
(or $F_{3}\otimes F_{3},$ a matrix which is also in the Fourier orbit
$F_{9}^{(4)}$), and their defect is 16.

\section{Non-reducibility of $BC_{9}^{(2)}$}

For the new orbit $BC_{9}^{(2)}$ constructed in the previous sections,
the question of reducibility has not yet been addressed. A Hadamard
matrix of order 9 can at most be $H_{3}$-reducible, and if this is
the case, it must be equivalent to a dephased Hadamard matrix where
the upper left $3\times3$ submatrix is $F_{3}$. If the new orbit
were $H_{3}$-reducible, in the dephased form there would be at least
two elements $\omega$, and two elements $\omega^{2}$, in such positions
that an $F_{3}$ submatrix can be generated through equivalence transformations.
For arbitrary parameters $x,y\in\mathbb{T}$, this cannot happen.
Indeed, there are $9\times9=81$ instances of dephased matrices, and
they all have elements taken from the set $\{1,a,\frac{1}{a},a^{2},\frac{1}{a^{2}},\frac{a}{b},\frac{a^{2}}{b^{2}},\frac{ab}{c},\frac{a}{bc},\frac{ab}{c^{2}},\frac{c^{2}}{ab}\}$
where $a,b$ and $c$ are (all different and) equal to any combination
of $x,y,u$ and $w$. No member in the set equals to $\omega$ or
$\omega^{2}$ over the entire parameter space, implying that the orbit
$BC_{9}^{(2)}$ is not reducible. For special values of the parameters,
on the other hand, $H_{3}$-reducibility may occur, as exemplified
by the appearance of the matrix $C_{3}^{\dagger}\otimes C_{3}$ for
$x=\omega,\,y=\omega^{2},\,u=w=1$.

\section{Affine suborbits of $BC_{9}^{(2)}$ \label{sec:Affine-suborbits-of}}

One-parameter suborbits of $BC_{9}^{(2)}$ will in general be non-affine.
However, two cases of affine suborbits can be identified.

\smallskip{}

\subsection{The affine suborbit $BC_{9A}^{(1)}$ }

~\smallskip{}
 ~\\
If $(xyuw)$ is a permutation of $(\mu,-\mu,-\omega,-\omega^{2}),$
with $\mu\in\mathbb{T}$, then $\zeta=\pm1$, ${\rm i}\sqrt{3}\pm2\mu$
or $-{\rm i}\sqrt{3}\pm2\mu$. The last two possibilities correspond
to two cases of two overlapping circles in the complex $\zeta$ plane,
of radius 2 and with midpoints at $\pm{\rm i}\sqrt{3}$ (Figure \ref{fig:Affine-suborbits-of}).
The situation at the points $\zeta=\pm1$ is more subtle since the
$\mu$ degree of freedom is hidden in how these points are approached
in the complex plane. Indeed, let for instance $\zeta=-1-\epsilon{\rm i}\mu$
, $\mu\in\mathbb{T}$, and take $\epsilon\to0^{+}$ in (\ref{eq:xyuw_zeta-1-1}).
As a result, $x,y\to\pm\mu$ and $u,w\to-\omega^{2}\mathrm{\,\,or\,\,}-\omega$,
as advertized. As already mentioned, this behavior is an artifact
of the parametrization (\ref{eq:xyuw_zeta-1-1}), and has been noted
before in a similar case \cite{Karlsson_JMP}.

Taking $(xyuw)=(\mu,-\mu,-\omega,-\omega^{2})$, the resulting suborbit
$BC_{9A}^{(1)}$ has the representation (\ref{eq:Basic block form})
with 
\begin{equation}
A=\left(\begin{array}{ccc}
1 & \mu & \mu\\
\mu & 1 & \mu\\
\mu & \mu & 1
\end{array}\right)\,\,\,\,\,\,\,\,\mathrm{and}\,\,\,\,\,\,\,\,\,\,B=-\left(\begin{array}{ccc}
\mu & \omega & \omega^{2}\\
\omega^{2} & \mu & \omega\\
\omega & \omega^{2} & \mu
\end{array}\right).\label{eq:C_9A^(1), A and B}
\end{equation}
The defect of a generic element of $BC{}_{9A}^{(1)}$ is 2. Interestingly,
although affine, $BC_{9A}^{(1)}$ is not a suborbit of the affine
orbit $F_{9}^{(4)}$. 
\begin{figure}
\includegraphics[bb=100bp 250bp 500bp 550bp,clip,width=0.4\paperwidth]{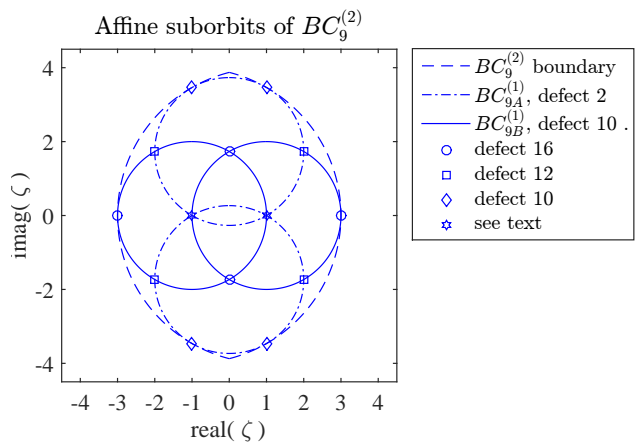}

\caption{Affine suborbits of $BC_{9}^{(2)}$, and the defect at intersection
points \label{fig:Affine-suborbits-of}}
\end{figure}

\smallskip{}

\subsection{The affine suborbit $BC_{9B}^{(1)}$}

~\smallskip{}
 ~\\
If $(xyuw)$ is a permutation of $(1,\xi,\omega\xi,\omega^{2}\xi),$
with $\xi\in\mathbb{T}$, then $\zeta=\pm(1+2\xi)$, $\pm(1+2\omega\xi)$
or $\pm(1+2\omega^{2}\xi)$ and there are two cases of three overlapping
circles in the complex $\zeta$ plane, again of radius 2 but with
midpoints at $\pm1$ (Figure \ref{fig:Affine-suborbits-of}). Taking
$(xyuw)=(1,\xi,\omega\xi,\omega^{2}\xi)$, the resulting affine suborbit
$BC_{9B}^{(1)}$ has the representation (\ref{eq:Basic block form})
with

\begin{equation}
A=\left(\begin{array}{ccc}
1 & 1 & 1\\
1 & 1 & 1\\
1 & 1 & 1
\end{array}\right)\,\,\,\,\,\,\,\,\mathrm{and}\,\,\,\,\,\,\,\,\,\,B=\xi\left(\begin{array}{ccc}
1 & \omega & \omega^{2}\\
\omega^{2} & 1 & \omega\\
\omega & \omega^{2} & 1
\end{array}\right),\label{eq:C_9B^(1), A and B-1}
\end{equation}
It is straightforward to show that the matrices of $BC_{9B}^{(1)}$
are equivalent to Fourier matrices on the form (\ref{eq:F_9^(4)})
with $\Delta_{1}=\mathrm{diag}(1,\bar{\xi},\bar{\xi}^{2})$ and $\Delta_{2}=\mathrm{diag}(1,\bar{\xi}^{2},\bar{\xi})$,
i.e. that the $BC_{9B}^{(1)}$ orbits can be seen as one-parameter
intersections of $BC_{9}^{(2)}$ with the Fourier orbit $F_{9}^{(4)}$.
The defect of a generic element of $BC{}_{9B}^{(1)}$ turns out to
be 10.

\medskip{}

\begin{conjecture}
$BC_{9A}^{(1)}$ and $BC{}_{9B}^{(1)}$ are the only affine suborbits
of $BC_{9}^{(2)}$. 
\end{conjecture}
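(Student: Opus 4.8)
The plan is to turn the question into a finite problem about sums of unit complex numbers. By the Main Theorem, a suborbit carrying a genuine continuous parameter must lie in the component $x+y+u+w=1$, the set $BC_{9}(-2)$ being finite; so along such a suborbit, written as $s\mapsto(x(s),y(s),u(s),w(s))$ with $x=e^{{\rm i}\phi_{x}}$, $y=e^{{\rm i}\phi_{y}}$, etc., one has the identity
\[
e^{{\rm i}\phi_{x}(s)}+e^{{\rm i}\phi_{y}(s)}+e^{{\rm i}\phi_{u}(s)}+e^{{\rm i}\phi_{w}(s)}=1 .
\]
Next I would note that dephasing the matrix (\ref{eq:Basic form}) in the standard way (dividing out the first row and the first column) yields a matrix whose entries are monomials $x^{m_{1}}y^{m_{2}}u^{m_{3}}w^{m_{4}}$ in the four parameters, and that among these entries are $\bar{x},\bar{y},\bar{u},\bar{w}$ themselves (e.g. the $(2,3),(4,7),(5,9),(6,8)$ entries). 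Hence the dephased matrix has all its entry-phases affine in $s$ if and only if each of $\phi_{x},\phi_{y},\phi_{u},\phi_{w}$ is affine in $s$; so ``affine suborbit of $BC_{9}^{(2)}$'' is precisely the requirement $\phi_{j}(s)=a_{j}s+b_{j}$ for real constants $a_{j},b_{j}$, $j\in\{x,y,u,w\}$. It then remains to classify the one-parameter families with affine phases satisfying the displayed identity.

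Differentiating the identity $k$ times in $s$ gives $\sum_{j}a_{j}^{k}e^{{\rm i}\phi_{j}(s)}=0$ for every $k\ge1$ (the $k=0$ relation being the identity itself). I would group the indices by equal slope, with distinct slope values $c_{1},\dots,c_{m}$ and group sums $\hat{w}_{\ell}=\sum_{j\in G_{\ell}}e^{{\rm i}b_{j}}$; since the members of a group share a slope, $\sum_{j\in G_{\ell}}e^{{\rm i}\phi_{j}(s)}=e^{{\rm i}c_{\ell}s}\hat{w}_{\ell}$, and the relations become $\sum_{\ell}c_{\ell}^{k}\hat{w}_{\ell}=0$ for $k\ge1$ together with $\sum_{\ell}\hat{w}_{\ell}=1$. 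A Vandermonde determinant argument on the first $m$ of these forces at least one $c_{\ell}$ to vanish, say $c_{m}=0$, and then (if $m\ge2$) $\hat{w}_{1}=\dots=\hat{w}_{m-1}=0$ and $\hat{w}_{m}=1$. A group with $\hat{w}_{\ell}=0$ must have size $\ge2$, since a single unit vector is never $0$; hence $2(m-1)+1\le4$, i.e. $m\le2$, the case $m=1$ degenerating to a constant family. Thus $m=2$, the group sizes are $(|G_{1}|,|G_{2}|)=(3,1)$ or $(2,2)$, the zero-slope group has sum $1$, and the moving group has sum $0$. In the $(3,1)$ case the fixed element is $1$ and the three moving unit numbers summing to $0$ form an equilateral triangle $\{\xi,\omega\xi,\omega^{2}\xi\}$, so $(x,y,u,w)$ is a permutation of $(1,\xi,\omega\xi,\omega^{2}\xi)$, which is $BC_{9B}^{(1)}$. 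In the $(2,2)$ case the two fixed unit numbers summing to $1$ are $\{-\omega,-\omega^{2}\}$ and the two moving ones summing to $0$ are $\{\mu,-\mu\}$, so $(x,y,u,w)$ is a permutation of $(\mu,-\mu,-\omega,-\omega^{2})$, which is $BC_{9A}^{(1)}$. No further possibility survives.

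The mechanical part is the Vandermonde bookkeeping over the multiplicity patterns $\{a,b,c,d\}$, $\{a,a,b,c\}$, $\{a,a,b,b\}$, $\{a,a,a,b\}$, $\{a,a,a,a\}$ and their specializations with one or more slopes equal to $0$, checking in each case that the group sums are forced to vanish except for the patterns $\{a,a,a,0\}$ and $\{a,a,0,0\}$, together with the elementary facts that unit complex numbers summing to $0$ form an equilateral triangle about the origin and that two unit numbers summing to $1$ must be $-\omega,-\omega^{2}$. The point that needs real care — and presumably the reason the statement is only conjectured — is the reduction of the first paragraph: one must make sure that ``affine suborbit'' does not depend on how the matrix is dephased, i.e. that no suborbit of $BC_{9}^{(2)}$ is equivalent, via $H\mapsto P_{1}D_{1}HD_{2}P_{2}$, to an affine orbit displayed in a form other than (\ref{eq:Basic form}). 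Granting the usual rigidity of the dephased representative (which makes the ``if and only if'' above unambiguous) and handling the degenerate locus $\zeta=\pm1$, $|1\pm\zeta|=4$ of the parametrization (\ref{eq:xyuw_zeta-1-1}) — where the four sheets $x\leftrightarrow y$, $u\leftrightarrow w$ collide, so that one argues first on a smooth arc of the suborbit and then propagates — the argument above is complete.
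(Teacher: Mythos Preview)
Note first that in the paper this statement is a \emph{Conjecture}: the paper does not prove it. The paragraph following the conjecture offers only heuristic support, observing that in the constraint graph for $x+y+u+w=1$ a freely rotatable subconfiguration must sum to zero, and that with four unit links this happens for a pair (giving $BC_{9A}^{(1)}$) or a triple (giving $BC_{9B}^{(1)}$), with the remark ``we see no other way''.

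Your proposal goes considerably further and is, modulo the caveat you yourself flag, an actual proof. The reduction step --- that dephasing the matrix (\ref{eq:Basic form}) produces Laurent monomials in $x,y,u,w$, among them $\bar{x},\bar{y},\bar{u},\bar{w}$ themselves, so that affineness of the dephased phases in $s$ is equivalent to affineness of $\phi_{x},\phi_{y},\phi_{u},\phi_{w}$ --- is correct as stated for the standard dephasing. The classification step is also correct: from $\sum_{j}e^{{\rm i}(a_{j}s+b_{j})}=1$ one may either differentiate and invoke Vandermonde as you do, or simply use the linear independence of the characters $s\mapsto e^{{\rm i}c_{\ell}s}$ for distinct $c_{\ell}$, to force one slope to vanish with group sum $1$ and all other group sums $0$; the size bookkeeping then leaves only $(3,1)$ and $(2,2)$, which are precisely $BC_{9B}^{(1)}$ and $BC_{9A}^{(1)}$. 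This is the rigorous counterpart of the paper's picture of a zero-sum, freely rotating subgraph. What the paper's heuristic buys is intuition; what your argument buys is an honest proof of the conjecture under the natural reading of ``affine suborbit'' (phases affine after the canonical dephasing). You are right that the residual issue is whether some other dephasing, reached through an $s$-dependent equivalence, could render a further suborbit affine; this is a definitional ambiguity in the paper rather than a gap in your mathematics, and since every dephased entry is still a Laurent monomial in $x,y,u,w$, it would be enough to check that for each row/column choice the resulting exponent vectors span $\mathbb{Q}^{4}$, a finite verification.
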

\medskip{}

As support for this conjecture, consider the graphs in the complex
plane representing the relation $x+y+u+w=1$ characteristic for $BC_{9}^{(2)}$.
In the $BC_{9A}^{(1)}$ case two links add up to zero and can for
that reason have any orientation (the $\mu$ parameter). In the $BC_{9B}^{(1)}$
case three links add up to zero and the resulting triangle can have
any orientation (the $\xi$ parameter). We see no other way for a
subgraph to exhibit a similar rotational invariance, and hence no
room for an associated affine parameter.

\section{One-parameter suborbits with defect 4 }

The occurrence of a full suborbit with the anomalous defect 10, the
\textit{$BC{}_{9B}^{(1)}$} of the previous section, has motivated
the search for additional suborbits with a defect different from the
generic value 2. A numerical search has revealed 24 one-variable suborbit
pieces in $BC_{9}^{(2)}$ with the anomalous defect 4 (Figure \ref{fig:Defect-4-suborbits}).
Lacking an analytic description, we have chosen not to try to put
these pieces together into full suborbits. However, if the two generic
pieces\footnote{based on $\sim700$ points.} shown in Figure \ref{fig:Defect-4-generic}
are complemented by other instances of themselves, obtained by permuting
the $(xyuw)$ parameters, one half of the full suborbit structure
in Figure \ref{fig:Defect-4-suborbits} is obtained. The remaining
half follows by taking the mirror image in the real $\zeta$-axis.
\begin{figure}
\includegraphics[bb=100bp 250bp 500bp 550bp,clip,width=0.4\paperwidth]{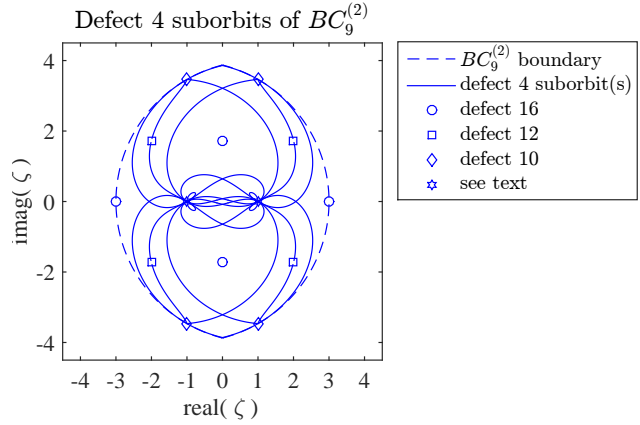}\caption{Defect 4 suborbits of $BC_{9}^{(2)}$\label{fig:Defect-4-suborbits}}
\end{figure}

\begin{figure}
\includegraphics[bb=50bp 250bp 600bp 530bp,clip,width=0.5\paperwidth]{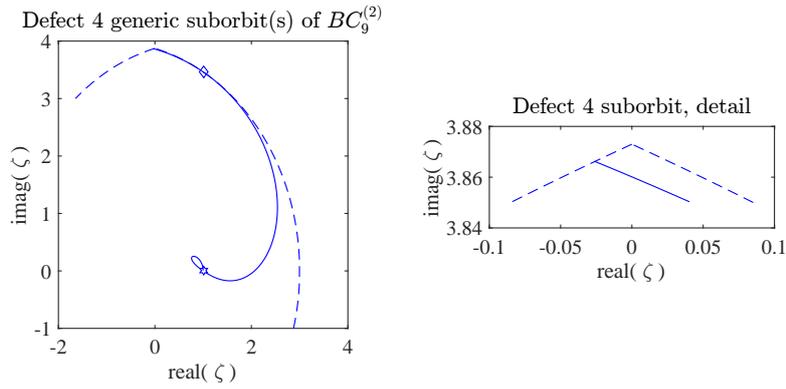}\caption{Defect 4 generic suborbit(s) of $BC_{9}^{(2)}$\label{fig:Defect-4-generic}}
\end{figure}

\section{Special matrices}

Of some interest are the points where the $BC_{9A}^{(1)}$ and $BC_{9B}^{(1)}$
orbits intersect each other, other instances of themselves, or the
boundary of the parameter domain, see Figure \ref{fig:Affine-suborbits-of}.

The $BC_{9B}^{(1)}$ orbit intersects another such orbit at the points
$\zeta=\pm{\rm i}\sqrt{3}$, and the boundary at $\zeta=\pm3$. The
corresponding matrices are equivalent to $C_{3}\otimes C_{3}\sim F_{3}\otimes F_{3}$,
see equation (\ref{eq:C_3_3}). This matrix is in the Butson class
$BH(9,3)$, and its defect is known to be 16 \cite{Krakow webguide}.

A\textbf{ }$BC_{9B}^{(1)}$ orbit intersects a $BC_{9A}^{(1)}$ orbit
at the six points $\zeta=\pm1$, $\pm(1-2\omega)$ and $\pm(1-2\omega^{2})$.
For instance, with $\mu=1$ or $\xi=-1$, $(xyuw)$ equals $(1,-1,-\omega,-\omega^{2})$
and $\zeta=-1$. The resulting matrix $BC_{9A\cap B}$ is a symmetric,
block circulant with circulant blocks matrix in the Butson class $BH(9,6)$,

\begin{equation}
BC_{9A\cap B}=\left(\begin{array}{ccc}
A & B & B^{T}\\
B^{T} & A & B\\
B & B^{T} & A
\end{array}\right)\label{eq:C_9AB^(0)}
\end{equation}
with 
\begin{equation}
A=\left(\begin{array}{ccc}
1 & 1 & 1\\
1 & 1 & 1\\
1 & 1 & 1
\end{array}\right)\,\,\,\,\,\,\,\,\mathrm{and}\,\,\,\,\,\,\,\,\,\,B=-\left(\begin{array}{ccc}
1 & \omega & \omega^{2}\\
\omega^{2} & 1 & \omega\\
\omega & \omega^{2} & 1
\end{array}\right),\label{eq:C_9AB^(0), A and B}
\end{equation}
and it is equivalent to the (Fourier orbit) matrix (\ref{eq:F_9^(4)})
with $\Delta_{1}=\mathrm{diag}(1,-1,1)$ and $\Delta_{2}=\mathrm{diag}(1,1,-1)$.
In whichever form, this matrix is a natural, common seed matrix for
all three affine orbits $F_{9}^{(4)}$, $BC{}_{9A}^{(1)}$ and $BC_{9B}^{(1)}$,
and its defect is 12.

Instances of the $BC_{9A}^{(1)}$ orbit intersect the domain boundary
at the four points $\zeta=\pm(1+4\omega)$ and $\zeta=\pm(1+4\omega^{2})$.
Taking $(xyuw)=(\omega,-\omega^{2,}-\omega,-\omega)$ at $\zeta=1+4\omega$
results in the $BCCB$ matrix, 
\begin{equation}
BC_{9Ab}=\left(\begin{array}{ccc}
C_{3} & -\omega^{2}C_{3}^{\dagger} & -\omega^{2}C_{3}^{\dagger}\\
-\omega^{2}C_{3}^{\dagger} & C_{3} & -\omega^{2}C_{3}^{\dagger}\\
-\omega^{2}C_{3}^{\dagger} & -\omega^{2}C_{3}^{\dagger} & C_{3}
\end{array}\right)
\end{equation}
Similarly, taking $(xyuw)=(\omega^{2},-\omega,-\omega^{2},-\omega^{2})$
at $\zeta=1+4\omega^{2}$ results in $BC_{9Ab}^{\dagger}$. These
two matrices are both in the Butson class $BH(9,6)$ and have defect
10. A direct calculation has shown that they are not equivalent. 
\begin{prop}
The Butson matrices $BC_{9Ab}$ and $BC_{9Ab}^{\dagger}$ are not
elements in the Fourier orbit $F_{9}^{(4)}$.\end{prop}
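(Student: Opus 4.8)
The plan is to show that the two Butson matrices $BC_{9Ab}$ and $BC_{9Ab}^{\dagger}$ cannot be equivalent to any element of $F_{9}^{(4)}$ by means of an equivalence invariant that separates them from the whole Fourier orbit. The most natural candidate is the defect together with a finer combinatorial fingerprint, since the defect alone does not suffice: as noted in the excerpt, $F_{9}^{(4)}$ contains matrices of defect $4$, $6$, $10$, $12$ and $16$, so the value $10$ recorded for $BC_{9Ab}$ is attained somewhere on the Fourier orbit as well (e.g. along the Backelin one-parameter suborbits, or the $BC_{9B}^{(1)}$ intersection). Hence defect must be supplemented by something else that is invariant under the equivalence $H\mapsto P_{1}D_{1}HD_{2}P_{2}$.

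First I would record that both matrices lie in the Butson class $BH(9,6)$, i.e. all entries are sixth roots of unity, and compute the multiset of entries up to the freedom of enphasing rows and columns — equivalently, work with the dephased forms and look at the multiset of all $81$ phases, or better at an enphasing-invariant such as the set of values taken by the $3\times 3$ minors, or by the Haagerup-type invariants $\{ H_{ij}H_{kl}\bar H_{il}\bar H_{kj} \}$. The Haagerup set is genuinely invariant under the full equivalence group and is finite here (a subset of sixth roots of unity with multiplicities), so it is an ideal tool. The concrete steps are: (i) dephase a general representative of $F_{9}^{(4)}(a,b,c,d)$ using the reducible form in (\ref{eq:F_9^(4)}) and compute its Haagerup set as a function of $(a,b,c,d)$; (ii) compute the Haagerup set of $BC_{9Ab}$ (a fixed, explicit $BH(9,6)$ matrix) and of $BC_{9Ab}^{\dagger}$; (iii) observe that no choice of $(a,b,c,d)$ reproduces the Haagerup multiset of $BC_{9Ab}$, and likewise for its adjoint. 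Because $BC_{9Ab}$ has defect $10$, and the locus in $F_{9}^{(4)}$ where the defect jumps to $10$ is the explicitly described one-parameter family (the image of $BC_{9B}^{(1)}$, together with the Backelin one-parameter suborbits), it is enough to check step (iii) only along those low-dimensional loci rather than over the whole four-parameter orbit — this cuts the work down to a genuinely finite/one-parameter comparison.

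The key reduction is therefore: \emph{if $BC_{9Ab}\sim H\in F_{9}^{(4)}$ then $H$ must have defect $10$, hence $H$ lies on one of the finitely many known one-parameter defect-$10$ suborbits of $F_{9}^{(4)}$; on each of these the relevant invariant (Haagerup set, or entry multiset of the dephased form) is easily tabulated and compared against the fixed value coming from $BC_{9Ab}$.} The same argument applied to $BC_{9Ab}^{\dagger}$ finishes the proposition. One should also confirm, as a sanity check consistent with the sentence preceding the proposition, that the invariant distinguishes $BC_{9Ab}$ from $BC_{9Ab}^{\dagger}$ themselves, so the argument is not accidentally proving too little.

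The main obstacle is pinning down \emph{which} invariant actually does the separating and verifying that the defect-$10$ locus inside $F_{9}^{(4)}$ is exhausted by the suborbits named in the excerpt. If the Haagerup set turns out to coincide, one falls back to a more refined invariant — for instance the distribution of the values of all $2\times2$ and $3\times 3$ subdeterminants, or the fact that $BC_{9Ab}$ is $BCCB$ while a defect-$10$ point of $F_{9}^{(4)}$ need not be equivalent to any $BCCB$ matrix (the set of $BCCB$ matrices is not equivalence-closed). Proving such a "non-$BCCB$" statement would itself require an argument, so the cleaner route is almost certainly the numerical/combinatorial one: exhibit an explicit discrete invariant, evaluate it on $BC_{9Ab}$ and $BC_{9Ab}^{\dagger}$, evaluate it on the (one-parameter, hence tractable) defect-$10$ strata of $F_{9}^{(4)}$, and read off the non-coincidence. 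A direct calculation of this type was already invoked to show $BC_{9Ab}\not\sim BC_{9Ab}^{\dagger}$, and the present proposition is proved in the same spirit.
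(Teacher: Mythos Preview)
Your strategy has a genuine gap, and it is precisely the one you flag yourself: you assume that the defect--$10$ locus inside $F_{9}^{(4)}$ is exhausted by the $BC_{9B}^{(1)}$ intersection together with the Backelin suborbits. The paper never claims this, and it is not obvious---the defect is a semialgebraic condition on four real parameters, and nothing in the text rules out further defect--$10$ strata. (Incidentally, the Backelin one-parameter suborbits have generic defect~$6$; only their $27$ intersection points reach defect~$10$.) Without an independent characterisation of that locus your reduction to ``a genuinely finite/one-parameter comparison'' does not go through, and the subsequent Haagerup-set comparison is left hanging.

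The paper sidesteps this difficulty entirely by using the \emph{Butson class} rather than the defect as the primary filter. Since $BC_{9Ab}\in BH(9,6)$, every Haagerup invariant $H_{ij}H_{kl}\bar H_{il}\bar H_{kj}$ is a sixth root of unity; for the explicit form~(\ref{eq:F_9^(4)}) one such invariant is simply $x_{k}$ (take rows $1,2$ and columns $1,4$, etc.), so equivalence to $BC_{9Ab}$ forces each $x_{1},x_{2},x_{3},x_{4}\in\{\pm1,\pm\omega,\pm\omega^{2}\}$. That cuts the Fourier orbit down to exactly $6^{4}=1296$ matrices, a finite set with no continuous parameter left. A direct defect computation over these $1296$ matrices yields the values $4,8,12,16$ only---never $10$---and the proposition follows. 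So the invariant you were looking for is not Haagerup-set-plus-defect on an unknown stratum, but Butson-class-plus-defect on a known finite set; the Haagerup invariants enter only to justify the restriction to sixth-root parameters, not as the separating fingerprint itself.
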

\begin{proof}
The two matrices $BC_{9Ab}$ and $BC_{9Ab}^{\dagger}$ are in $BH(9,6)$
and have defect 10. Taking $x_{1}$, $x_{2}$, $x_{3}$ and $x_{4}$
in (\ref{eq:F_9^(4)}) as any combination of $\pm1$, $\pm\omega$
and $\pm\omega^{2}$, a direct evaluation shows that for the resulting
$6^{4}$ $BH(9,6)$ matrices in $F_{9}^{(4)}$ the defect is $4$
in 864 cases, $8$ in 243 cases, $12$ in 162 cases and $16$ in 27
cases, but never $10$. 
\end{proof}
Finally, two complex Hadamard matrices of order 9 reported earlier
can now be identified as elements in $BC_{9}^{(2)}$. For $(xyuw)$
some permutation of $(\tau\tau\bar{\tau}\bar{\tau})$, with $\tau=\frac{1}{4}(1+{\rm i}\sqrt{15})$,
either $\zeta=0$ or $\zeta=\pm{\rm i}\sqrt{15}$. The corresponding
element of $BC_{9}^{(2)}$ is equivalent to the matrix $W_{9A}$ of
\cite{Szollosi Exotic}. For $(xyuw)$ some permutation of $(\epsilon\,\epsilon^{3}\epsilon^{7}\epsilon^{9})$,
with $\epsilon=\exp(2\pi{\rm i}/10)$, the corresponding elements
of $BC_{9}^{(2)}$ are equivalent to the $B_{9}^{(0)}$ of \cite{Beau_Nic,Krakow webguide}.
This matrix is found at the six points where $\zeta=\pm\sqrt{5}$,
$\zeta=\pm{\rm i}(\sqrt{(5+\sqrt{5})/2}+\sqrt{(5-\sqrt{5})/2})$ or
$\zeta=\pm{\rm i}(\sqrt{(5+\sqrt{5})/2}-\sqrt{(5-\sqrt{5})/2})$.
The defect for the matrices $B_{9}^{(0)}$ and $W_{9A}$ is 2.

\section{MUBs for $N=3^{2}$ \label{sec:MUBs}}

Mutually unbiased bases (MUBs) in $\mathbb{C}^{N}$ are closely related
to complex Hadamard matrices: taking the basis vectors as columns
of a matrix, the standard basis can be represented by the unit matrix,
and all other bases will then appear as (mutually unbiased and) in
general enphased complex Hadamard matrices. This section serves to
illustrate how the results of the present paper are related to such
bases in dimension $N=9$.

Complete sets of $N+1$ MUBs exist for all prime and powers of a prime
dimensions $N$ (for recent reviews, see \cite{Godsil Roy 2009,Durt et al}).
For example, for $N=9=3^{2}$, the complete set $\{B_{i}\}_{i=0...9}$
of MUBs given in \cite{Wootters_Fields,Chaturvedi} have the form
$B_{0}=I$ and $B_{i}=\frac{1}{3}D_{i}(F_{3}\otimes F_{3})$ where
the unitary, diagonal matrices $D_{i}$ are specified in Table \ref{tab:Diagonals-of-D_i}.
The $D_{i}$ matrices form a closed set under the product $D_{i}^{\dagger}D_{j}=D_{k}$
according to the pattern displayed for the $M_{i}$ matrices in Table
\ref{tab:M_i_dag_M_j}. 
\begin{table}
\begin{tabular}{|c|c|c|c|c|c|c|c|c|}
\hline 
$\,\,D_{1}\,\,$  & $\,\,D_{2}\,\,$  & $\,\,D_{3}\,\,$  & $\,\,D_{4}\,\,$  & $\,\,D_{5}\,\,$  & $\,\,D_{6}\,\,$  & $\,\,D_{7}\,\,$  & $\,\,D_{8}\,\,$  & $\,\,D_{9}\,\,$\tabularnewline
\hline 
\hline 
$1$  & $1$  & $1$  & $1$  & $1$  & $1$  & $1$  & $1$  & $1$\tabularnewline
\hline 
$1$  & $\omega^{2}$  & $\omega$  & $\omega$  & $1$  & $\omega^{2}$  & $\omega^{2}$  & $\omega$  & $1$\tabularnewline
\hline 
$1$  & $\omega^{2}$  & $\omega$  & $\omega$  & $1$  & $\omega^{2}$  & $\omega^{2}$  & $\omega$  & $1$\tabularnewline
\hline 
$1$  & $1$  & $1$  & $\omega$  & $\omega$  & $\omega$  & $\omega^{2}$  & $\omega^{2}$  & $\omega^{2}$\tabularnewline
\hline 
$1$  & $\omega$  & $\omega^{2}$  & $\omega^{2}$  & $1$  & $\omega$  & $\omega$  & $\omega^{2}$  & $1$\tabularnewline
\hline 
$1$  & $1$  & $1$  & $\omega^{2}$  & $\omega^{2}$  & $\omega^{2}$  & $\omega$  & $\omega$  & $\omega$\tabularnewline
\hline 
$1$  & $1$  & $1$  & $\omega$  & $\omega$  & $\omega$  & $\omega^{2}$  & $\omega^{2}$  & $\omega^{2}$\tabularnewline
\hline 
$1$  & $1$  & $1$  & $\omega^{2}$  & $\omega^{2}$  & $\omega^{2}$  & $\omega$  & $\omega$  & $\omega$\tabularnewline
\hline 
$1$  & $\omega$  & $\omega^{2}$  & $\omega^{2}$  & $1$  & $\omega$  & $\omega$  & $\omega^{2}$  & $1$\tabularnewline
\hline 
\end{tabular}

\medskip{}
 \caption{Diagonals of the MUB unitaries $D_{i}$ as obtained from \cite{Chaturvedi}\label{tab:Diagonals-of-D_i}.
Note that $D_{3}=D_{2}^{\dagger}$, $D_{7}=D_{4}^{\dagger}$, $D_{8}=D_{6}^{\dagger}$
and $D_{9}=D_{5}^{\dagger}$.}
\end{table}

Several other complete sets of MUBs have been shown to be equivalent
to this set in the sense that one set can be obtained from another
by means of an overall unitary transformation, c.f. Ref \cite{Godsil Roy 2009}.

Of particular interest in the context of the present paper is the
complete set $\{M_{i}\}_{i=0...9}$ which is obtained from the $\{B_{i}\}$
set through left multiplication by (the unitary matrix) $\frac{1}{3}(F_{3}\otimes F_{3})^{\dagger}$,
i.e. \footnote{note that $(F_{3}\otimes F_{3})^{\dagger}$ and $F_{3}\otimes F_{3}$
are column permutation equivalent.} 
\begin{eqnarray*}
M_{0} & = & \frac{1}{3}(F_{3}\otimes F_{3})^{\dagger}\,\,\mathrm{or,\,\,equivalently,}\,\,\frac{1}{3}F_{3}\otimes F_{3}\\
M_{1} & = & I\\
M_{i} & = & \frac{1}{9}(F_{3}\otimes F_{3})^{\dagger}D_{i}(F_{3}\otimes F_{3}),\,\,\,i=2,...,9
\end{eqnarray*}
Except for $M_{0}$ and $M_{1}$, the $M_{i}$ matrices are symmetric,
\textit{block circulant with circulant blocks} mutually unbiased Hadamard
matrices, i.e. (apart from normalization) matrices of the general
form (\ref{eq:Basic form}) investigated in the previous sections.
The outstanding features \cite{Combescure II} of this set are that,
for $i,\,j,\,k$ in $\{1,2,...,9\}$,

\begin{eqnarray*}
M_{i}^{3} & = & I\\
M_{i}M_{j} & = & M_{j}M_{i}\\
M_{i}^{\dagger}M_{j} & = & M_{k},\,\,\mathrm{see\,\,Table\,\,\ref{tab:M_i_dag_M_j}},
\end{eqnarray*}
all properties inherited from the $\{D_{i}\}_{i=1,...,9}$ matrices.
Furthermore, $M_{3}=M_{2}^{\dagger}$, $M_{7}=M_{4}^{\dagger}$, $M_{8}=M_{6}^{\dagger}$
and $M_{9}=M_{5}^{\dagger}$, i.e. for each $M_{i}$ there is an $M{}_{j}$
such that $M_{i}^{\dagger}=M_{j}$. This set of MUBs is therefore
invariant under Hermitan conjugation. 
\begin{table}
\begin{tabular}{|c||c|c|c||c|c|c||c|c|c|}
\hline 
 & $\,M_{1}\,$  & $\,M_{2}\,$  & $\,M_{3}\,$  & $\,M_{4}\,$  & $\,M_{5}\,$  & $\,M_{6}\,$  & $\,M_{7}\,$  & $\,M_{8}\,$  & $\,M_{9}\,$\tabularnewline
\hline 
\hline 
$M_{1}^{\dagger}$  & $I$  & $M_{2}$  & $M_{3}$  & $M_{4}$  & $M_{5}$  & $M_{6}$  & $M_{7}$  & $M_{8}$  & $M_{9}$\tabularnewline
\hline 
$M_{2}^{\dagger}$  & $M_{3}$  & $I$  & $M_{2}$  & $M_{6}$  & $M_{4}$  & $M_{5}$  & $M_{9}$  & $M_{7}$  & $M_{8}$\tabularnewline
\hline 
$M_{3}^{\dagger}$  & $M_{2}$  & $M_{3}$  & $I$  & $M_{5}$  & $M_{6}$  & $M_{4}$  & $M_{8}$  & $M_{9}$  & $M_{7}$\tabularnewline
\hline 
\hline 
$M_{4}^{\dagger}$  & $M_{7}$  & $M_{8}$  & $M_{9}$  & $I$  & $M_{2}$  & $M_{3}$  & $M_{4}$  & $M_{5}$  & $M_{6}$\tabularnewline
\hline 
$M_{5}^{\dagger}$  & $M_{9}$  & $M_{7}$  & $M_{8}$  & $M_{3}$  & $I$  & $M_{2}$  & $M_{6}$  & $M_{4}$  & $M_{5}$\tabularnewline
\hline 
$M_{6}^{\dagger}$  & $M_{8}$  & $M_{9}$  & $M_{7}$  & $M_{2}$  & $M_{3}$  & $I$  & $M_{5}$  & $M_{6}$  & $M_{4}$\tabularnewline
\hline 
\hline 
$M_{7}^{\dagger}$  & $M_{4}$  & $M_{5}$  & $M_{6}$  & $M_{7}$  & $M_{8}$  & $M_{9}$  & $I$  & $M_{2}$  & $M_{3}$\tabularnewline
\hline 
$M_{8}^{\dagger}$  & $M_{6}$  & $M_{4}$  & $M_{5}$  & $M_{9}$  & $M_{7}$  & $M_{8}$  & $M_{3}$  & $I$  & $M_{2}$\tabularnewline
\hline 
$M_{9}^{\dagger}$  & $M_{5}$  & $M_{6}$  & $M_{4}$  & $M_{8}$  & $M_{9}$  & $M_{7}$  & $M_{2}$  & $M_{3}$  & $I$\tabularnewline
\hline 
\end{tabular}

\medskip{}
 \caption{Multiplication table for $M_{i}^{\dagger}M_{j}$\label{tab:M_i_dag_M_j}.
Note the block circulant with circulant blocks structure.}
\end{table}

The parameters in (\ref{eq:Basic form}) corresponding to the $\{M_{i}\}_{i=2,...,9}$
matrices are easily identified through inspection of the eigenvalues.
A comparison of Table \ref{tab:Diagonals-of-D_i} with the eigenvalues
(\ref{eq:eigenvalues}) for $\sigma=1$ and $\sigma=-2$ shows that
for $M_{2},M_{3},M_{5}$ and $M_{9}$ the parameters $(xyuw)$ in
(\ref{eq:Basic form}) are permutations of $(1,1,\omega,\omega^{2})$,
see Table \ref{tab:Parameters-in-the}. These matrices are therefore
equal to elements in the suborbit $BC_{9B}^{(1)}$ in Section \ref{sec:Affine-suborbits-of}
(the intersection of the $BC_{9}^{(2)}$ of Theorem \ref{thm:Main theorem}
with $F_{9}^{(4)}$) corresponding to $\xi=1$. For $-M_{4},-M_{6},-M_{7}$
and $-M_{8}$ the parameters $(xyuw)$ are permutations of $(\omega,\omega,\omega^{2},\omega^{2})$,
see Table \ref{tab:Parameters-in-the}, characteristic for matrices
in the exceptional set $BC_{9}(-2)$ also identified in Theorem \ref{thm:Main theorem}.

It should be pointed out that as Hadamard matrices, the $M_{i}$ matrices
for $i\ge2$ are all equivalent to each other, and to the $\frac{1}{3}C_{3}\otimes C_{3}$
matrix of (\ref{eq:C_3_3}) (for instance, $-M_{7}=\frac{1}{3}C_{3}\otimes C_{3}$).
Their common defect is therefore 16. 
\begin{table}
\begin{tabular}{|c|c|c|c|c|c|c|c|c|}
\hline 
 & $M_{2}$  & $M_{3}$  & $-M_{4}$  & $M_{5}$  & $-M_{6}$  & $-M_{7}$  & $-M_{8}$  & $M_{9}$\tabularnewline
\hline 
\hline 
$x$  & $1$  & $1$  & $\omega^{2}$  & $\omega$  & $\omega^{2}$  & $\omega$  & $\omega$  & $\omega^{2}$\tabularnewline
\hline 
$y$  & $\omega^{2}$  & $\omega$  & $\omega^{2}$  & $1$  & $\omega$  & $\omega$  & $\omega^{2}$  & $1$\tabularnewline
\hline 
$u$  & $1$  & $1$  & $\omega$  & $\omega^{2}$  & $\omega$  & $\omega^{2}$  & $\omega^{2}$  & $\omega$\tabularnewline
\hline 
$w$  & $\omega$  & $\omega^{2}$  & $\omega$  & $1$  & $\omega^{2}$  & $\omega^{2}$  & $\omega$  & $1$\tabularnewline
\hline 
\end{tabular}\medskip{}

\caption{Parameters in the $BC_{9}(x,y,u,w)$ orbit (see (\ref{eq:Basic form}))
corresponding to the $M_{i}$ matrices. Note that $u=\bar{x}$ and
$w=\bar{y}$. \label{tab:Parameters-in-the}}
\end{table}

As a final remark, recall from Section \ref{sec:Numerical-experiments}
that a permutation of the $(xyuw)$ parameters in (\ref{eq:Basic form})
can be undone by permutations of rows and columns. Specifically, a
permutation of the parameters $(xyuw)$ in Table \ref{tab:Parameters-in-the}
either leaves the set $\{M_{i}\}_{i=1,...,9}$ invariant, or gives
rise to one of two additional, but equivalent, sets $\{P_{1}'M_{i}P_{1}\}_{i=1,...,9}$
or $\{P_{2}'M_{i}P_{2}\}_{i=1,...,9}$ where $P_{1}$ and $P_{2}$
are (unitary) permutation matrices that interchange $y$ and $u$,
or $u$ and $w$, respectively.

Explicit expressions for complete sets of MUBs in 9 dimensions have
been published several times. It has been verified that the versions
given in references \cite{Aravind,Klimov et al MUB9,Wiesniak et al}\footnote{There is a misprint in Table 5 of \cite{Aravind}. The corrected entries
read $|3_{4}\rangle=1\,\bar{\omega}\,1\,\omega\,1\,\omega\,\omega\,1\,\omega$
and $|3_{5}\rangle=1\,\omega\,\omega\,\omega\,\bar{\omega\,}\bar{\omega\,}\omega\,\bar{\omega\,}\bar{\omega}$.
There is also a misprint in the $B_{3}$ of Appendix G in \cite{Wiesniak et al}.
The corrected matrix elements read $B_{3}(2,5)=\alpha_{3}^{2}$, $B_{3}(3,5)=$1
and $B_{3}(4,5)=\alpha_{3}$.} indeed are equivalent to the $\{M_{i}\}$ set of the present section
(and therefore also to the set of references \cite{Wootters_Fields,Chaturvedi}).

\subsection*{Acknowledgement}

Questions and suggestions by a referee are gratefully acknowledged.
We are particularly indebted to her/him for providing us with a copy
of the not generally available reference \cite{Szollosi Master Thesis}.

\newpage{} 

\end{document}